\newcommand{\set}[1]{\{ #1 \}}
\newcommand{\PSPACE}{\mathrm{PSPACE}}
\newcommand{\PTIME}{\mathrm{PTIME}}
\newcommand{\cocompl}{\mathrm{co}}
\newcommand{\LTL}{\mathrm{LTL}}
\newcommand{\FPT}{\mathrm{FPT}}
\newcommand{\APTIME}{\mathrm{APTIME}}
\newcommand{\NP}{\mathrm{NP}}
\newcommand{\QBF}{\mathrm{QBF}}
\newcommand{\SAT}{\mathrm{SAT}}
\newcommand{\ksurdeux}{\lfloor k/2 \rfloor}
\newcommand{\Reach}{\mathrm{Reach}}
\newcommand{\GenReach}{\mathrm{GenReach}}
\newcommand{\Attr}{\mathrm{Attr}}
\newcommand{\next}{\nu}
\newcommand{\M}{\mathcal{M}}
\newcommand{\up}{\mu}
\newcommand{\N}{\mathbb{N}}
\newcommand{\VE}{V_{\circ}}
\newcommand{\VA}{V_{\Box}}
\newcommand{\W}{\mathcal{W}}
\newcommand{\Aa}{\mathcal{A}}
\newcommand{\WE}{\W_{E}}
\newcommand{\WA}{\W_{A}}
\newcommand{\G}{\mathcal{G}}
\newcommand{\binomksurdeuxk}{\binom{k}{\ksurdeux}}
\def\qed{\rule{0.4em}{1.4ex}}
\begin{document}
\title{The surprising complexity of\\
generalized reachability games}
\author{Nathana\"el Fijalkow \inst{1,2} \and Florian Horn \inst{1}}
\institute{LIAFA \\ CNRS \& Universit\'e Denis Diderot - Paris 7, France \\
\email{\{nath,florian.horn\}@liafa.jussieu.fr}\\
\and \'ENS Cachan \\ \'Ecole Normale Sup\'erieure de Cachan, France}


\maketitle

\begin{abstract}
Games on graphs provide a natural and powerful model for reactive systems.
In this paper, we consider generalized reachability objectives,
defined as conjunctions of reachability objectives.
We first prove that deciding the winner in such games is $\PSPACE$-complete, 
although it is fixed-parameter tractable with the number of reachability objectives as parameter.
Moreover, we consider the memory requirements for both players
and give matching upper and lower bounds on the size of winning strategies.
In order to allow more efficient algorithms,
we consider subclasses of generalized reachability games.
We show that bounding the size of the reachability sets gives
two natural subclasses where deciding the winner can be done 
efficiently.
\end{abstract}

\section{Introduction}

\medskip\noindent{\bf Graphs games.}
Our purpose is to study reactive systems by abstracting them into graphs games:
a state of the system is represented by a vertex in a finite directed graph,
and a transition corresponds to an edge.
If in a given state, the controller can choose the evolution of the system,
then the corresponding vertex is controlled by the first player, Eve.
Otherwise, the system evolves in an uncertain way: 
we consider the worst-case scenario where
a second player, Adam, controls those states.
To a run of the system corresponds a play on the game:
we put a pebble in the initial vertex, 
then Eve and Adam move this pebble along the edges,
constructing an infinite sequence.
The specification of the system gives an objective Eve tries to ensure on this sequence.
In order to synthesize a controller,
we are interested in two questions: whether Eve wins
in the game, and what resources are needed to construct a winning strategy (see~\cite{GTW02-Dagstuhl} for more details).

\medskip\noindent{\bf System specifications.}
To specify properties of a system, we construct a set of infinite sequences
representing the correct behaviors of the system.
From an infinite sequence we extract finite information
to decide whether the run it represents meet the specification.
For instance, considering the set of vertices visited infinitely often
allows to specify the classical $\omega$-regular properties,
\textit{e.g} B\"uchi, parity, Streett, Rabin and M\"uller objectives.
Other informations can be carried out, as for instance the set of vertices visited with positive frequency~\cite{TBG09},
or the order in which the vertices are visited for specifying $\LTL$ objectives~\cite{KPV07-CAV,HTW08-ATVA,Zim}.
In this work, we observe the set of vertices visited at least once,
which allows to specify reachability objectives, 
also called weak objectives~\cite{NeumannSW02,StaigerW74,Mostowski91,KupfermanVW00} 

\medskip\noindent{\bf Generalized reachability objectives.}
The (simple) reachability objective requires, given a subset of vertices $F$, that a vertex from $F$ is reached.
Reachability objectives only specifies that one property
(represented by $F$) is satisfied along the run.
We allow more properties to be specified by using generalized reachability objectives,
defined as conjunctions of $k$ reachability objectives.
In this context, a reachability objective is often referred as a color: 
a generalized reachability objective is then to see each of the $k$ colors at least once.

\section{Definitions}

The games we consider are played on an \emph{arena} $\Aa = (V,(\VE,\VA),E)$, which consists of a finite graph $(V,E)$ 
and a partition $(\VE,\VA)$ of the vertex set $V$: 
a vertex is controlled by Eve if it belongs to $\VE$
and by Adam if it belongs to $\VA$.
Vertices from $\VE$ are depicted by a circle, and vertices from $\VA$ by a square.
We denote by $n$ the number of vertices and $m$ the number of edges.
Playing consists in moving a pebble along the edges: the pebble is placed on the initial vertex $v_0$,
then the player who controls the vertex chooses an edge and sends the pebble along this edge
to the next vertex.
From this infinite interaction results a \emph{play} $\pi$, which is an infinite sequence of vertices
$v_0, v_1, \ldots$ where for all $i$, we have $(v_i,v_{i+1}) \in E$, \textit{i.e} $\pi$ is an infinite path in the graph.
We denote by $\Pi$ the set of all plays, and define \emph{objectives}
for a player by giving a set of winning plays $\Phi \subseteq \Pi$. 
The games are zero-sum, which means that if Eve has the objective $\Phi$, then Adam has the objective 
$\Pi \setminus \Phi$ (the objectives are opposite).
Formally, a \emph{game} is given by a couple $\G = (\Aa,\Phi)$ where $\Aa$ is an arena 
and $\Phi$ an objective.

A \emph{strategy} for a player is a function that prescribes, given a finite history of the play, the next move.
Formally, a \emph{strategy} for Eve is a function $\sigma : V^* \cdot \VE \to V$ 
such that for a finite history $w \in V^*$ and a current position $v \in \VE$, the prescribed move is legal,
\textit{i.e} along an edge: $(v,\sigma(w \cdot v)) \in E$.
Strategies for Adam are defined similarly, and usually denoted by $\tau$.
Once a game $\G = (\Aa,\Phi)$, a starting vertex $v_0$ and strategies $\sigma$ for Eve and $\tau$ for Adam are fixed, 
there is a unique play denoted by $\pi(v_0,\sigma,\tau)$, which is said to be winning for Eve if it belongs to $\Phi$.
The sentence ``Eve wins from $v_0$'' means that she has a winning strategy from $v_0$, that is a strategy $\sigma$ 
such that for all strategy $\tau$ for Adam, the play $\pi(v_0,\sigma,\tau)$ is winning.
The first natural problem we consider is to ``solve the game'', that is given a game $\G$ and a starting vertex $v_0$, 
to decide whether Eve wins from $v_0$.
We denote by $\WE(\G)$ the winning positions of Eve, 
that is the set of vertices from where Eve wins (also referred as winning set), and analogously $\WA(\G)$ for Adam.
We can prove that in generalized reachability games,
we have $\WE(\Aa,\Phi) \cup \WA(\Aa,\Phi) = V$: from any vertex, either of the two players has a winning strategy.
We say that the games are \textit{determined}.

The strategies as defined in their full generality above are infinite objects.
Indeed, in this general setting, to pick the next-move, Eve considers the whole history of the play,
whose size grows arbitrarily.
A nicer setting, giving rise to finitely-representable objects, is to define strategies relying on memory structures.
Formally, a \emph{memory structure} $\M = (M, m_0,\up)$ for an arena $\Aa$ consists of a set $M$ of memory states, 
an initial memory state $m_0 \in M$, and an update function $\up: M \times E \to M$. 
A memory structure is similar in fashion to an automaton synchronized with the arena: 
it starts from $m_0$ and reads the sequence of edges produced by the arena.
Whenever an edge is taken, the current state is updated using the update function $\up$.
A strategy relying on a memory structure $\M$, whenever it picks the next move, 
considers only the current vertex and the current memory state: 
it is thus given by a next-move function $\next: \VE \times M \to V$.
Formally, given a memory structure $\M$ and a next-move function $\next$, 
we can define a strategy $\sigma$ for Eve by $\sigma(w \cdot v) = \next(v, \up^*(w \cdot v))$.
(The update function can be extended to a function $\up^*: V^+ \to M$ by defining $\up^*(v) = m_0$ and 
$\up^* (w \cdot u \cdot v) = \up(\up^*(w \cdot u), (u,v))$.)
A strategy with memory structure $\M$ has finite memory if $M$ is a finite set.
It is \emph{memoryless}, or \emph{positional} if $M$ is a singleton: in this case, the choice for the next move
only depends on the current vertex.
Note that a memoryless strategy can be described as a function $\sigma: \VE \to V$.

We can make the synchronized product explicit: an arena $\Aa$ and a memory structure $\M$ for $\Aa$ induce 
the expanded arena $\Aa \times \M = (V \times M, (\VE \times M, \VA \times M), E \times \up)$ where $E \times \up$ is defined by:
$((v,m), (v',m')) \in E'$ if $(v,v') \in E$ and $\up(m,(v,v')) = m'$.
There is a natural one-to-one mapping between plays in $\Aa$ and in $\Aa \times \M$,
and also from memoryless strategies in $\Aa \times \M$
to strategies in $\Aa$ using $\M$ as memory structure.
It follows that if a player has a memoryless winning strategy for the arena $\Aa \times \M$, then
he has a winning strategy using $\M$ as memory structure for the arena $\Aa$.
This \textit{key} property will be used later on.

A \emph{reachability objective} requires that a vertex from a given subset of vertices $F$ is reached:
$\Reach(F) = \set{v_0, v_1, v_2 \ldots \mid \exists p \in \N, v_p \in F} \subseteq \Pi$.
Games in the form $\G = (\Aa,\Reach(F))$ are called reachability games.
To determine whether Eve wins a reachability game, we compute the reachability set attractor.
We define the sequence $(\Attr_i(F))_{i \geq 0}$:
$$\begin{array}{llll}
\Attr_0(F) & = & F \\
\Attr_{i+1}(F) & = & \Attr_i(F) & \cup \quad \set{u \in \VE \mid \exists (u,v) \in E, v \in \Attr_i(F)} \\
 & & & \cup \quad \set{u \in \VA \mid \forall (u,v) \in E, v \in \Attr_i(F)} \\
\end{array}$$
Then $\Attr(F)$ is the limit of the non-decreasing sequence $(\Attr_i(F))_{i \geq 0}$.
We can prove that $\WE(\Aa,\Reach(F))$ is exactly $\Attr(F)$.

\medskip\noindent{\bf Generalized reachability objectives.}
A \emph{generalized reachability objective} requires that 
each of the given $k$ subsets of vertices $F_1,\ldots,F_k$ is reached:
$$\GenReach(F_1,\ldots,F_k) = \set{\pi \mid \forall i, \exists p_i \in \N, v_{p_i} \in F_i}.$$
Associating to each reachability objective a color, 
we can reformulate the generalized reachability objective:
it requires to see each of the $k$ colors at least once, in any order.
Games in the form $\G = (\Aa,\GenReach(F_1,\ldots,F_k))$ are called generalized reachability games. 
The special cases where in $\Aa$, $\VA$ (respectively $\VE$) is empty are called one-player 
(respectively opponent-player) generalized reachability games.

\begin{example}
We consider the arena drawn in Figure~\ref{fig:ex_gen_reach}. 
A generalized reachability game is defined by the objective
$\GenReach(\set{1,2},\set{3})$.
The central vertex is the initial one.
Eve tries to visit one of the two thick vertices and the dashed vertex.

\begin{figure}
\label{fig:ex_gen_reach}
\begin{center}
\begin{picture}(40,40)(0,0)
	\gasset{Nw=8,Nh=8}

  	\node[Nmarks=i,iangle=-90](1)(20,15){}
  	\rpnode[linewidth=0.5,polyangle=45](2)(35,0)(4,5){$1$}
  	\node[linewidth=0.5](3)(5,0){$2$}
  	\rpnode[dash={1.5}0,polyangle=45](4)(20,30)(4,5){$3$}

  	\drawedge(1,2){}
  	\drawedge(1,3){}
  	\drawedge(1,4){}
  	\drawedge[curvedepth=-2](2,3){}
  	\drawedge[curvedepth=-2](3,2){}
  	\drawedge[curvedepth=-2](2,4){}
  	\drawedge[curvedepth=2](3,4){}
	\drawloop(4){}
\end{picture}
\end{center}
\caption{An example of a generalized reachability game}
\end{figure}
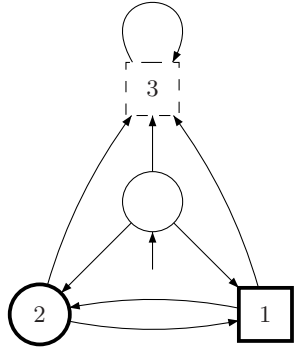
\end{example}

\medskip\noindent{\bf Contributions.}
Our contributions are as follows:
\begin{itemize}
	\item We first prove that deciding the winner in generalized reachability games is $\PSPACE$-complete.
Using the same ideas, we also show that the one-player restriction, where
all vertices belong to Eve, is $\NP$-complete,
and that the opponent-player restriction, where
all vertices belong to Adam, can be solved in polynomial time.
On the positive side, it is fixed-parameter tractable $(\FPT)$ with the number $k$ of colors as parameter.

	\item We study the size of the winning strategies for both players:
we prove matching upper and lower bounds, \textit{i.e}
in any arena, if Eve has a winning strategy, then she has a winning strategy that uses $2^k - 1$ memory states,
and there is an arena where Eve wins but there are no winning strategies with less than $2^k - 1$ memory states,
and similarly for Adam with the bound $\binomksurdeuxk$.

	\item We then consider the subclasses where we restrict the number of vertices sharing the same color
(in other words, the size of reachability sets).
This reveals a trichotomy: if three vertices are allowed to share the same color,
then deciding the winner is, as in the general case, $\PSPACE$-complete.
However, if each color appears only once, then the problem is polynomial.
If each color appears only twice, then the problem is polynomial for one-player games,
where Eve controls all vertices.
\end{itemize}

\medskip\noindent{\bf Outline.}
In section 3, we first study the complexity of solving generalized reachability games,
for two-player and one-player games,
and then give matching upper and lower bounds for the memory required.
In section 4, we consider the subclasses of games where the size of reachability set is restricted,
in order to find tractable subclasses.

\section{The complexity of generalized reachability games}

In this section we prove that the winner problem in generalized reachability games 
is $\PSPACE$-complete.
Our $\PSPACE$-hardness result follows from a reduction from $\QBF$ 
(evaluation of a quantified boolean formula in conjunctive normal form).
However, we show that solving generalized reachability games with few colors is easy,
as it is fixed-parameter tractable using the number of colors as parameter.

We then study one-player restrictions.
We prove that the one-player generalized reachability games are $\NP$-complete.
The other one-player restriction, opponent-player generalized reachability games, 
can be solved in polynomial time.

The last subsection investigates memory requirements for both players.
We present matching upper and lower bounds:
Eve needs $2^k - 1$ memory states and Adam $\binomksurdeuxk$,
where $k$ is the number of colors.

\subsection{$\PSPACE$-completeness of solving generalized reachability games}

As a first step we define a reduction from $\QBF$ to the winner problem of generalized reachability games.
Consider a quantified boolean formula
$$Q_1 x_1 \ Q_2 x_2 \ \ldots Q_n x_n \ \phi\ ,$$
where $\phi$ is a propositional formula in conjunctive normal form, \textit{i.e}
$$\phi = \bigwedge_{i \leq k} \ell_{i,1} \ \vee \ \ell_{i,2} \ \vee \ \ldots \ \vee \ \ell_{i,j_i}$$
and $\ell_{i,j}$ is either $x_i$ or $\neg x_i$ for some $i \leq n$.
We construct a generalized reachability game where Eve wins if and only if the formula is true. 
Intuitively, the two players will sequentially choose to assign values to variables, 
following the quantification order and starting from the outermost variable. 
Eve chooses existential variables and Adam chooses universal variables.
Formally, the game is as follows:
\begin{itemize}
	\item for each variable $x_i$, there are two vertices, $x_i$ and $\overline{x_i}$;
	\item for each variable $x_i$, there is a choice vertex $v_i$ which leads to $x_i$ and $\overline{x_i}$.
The choice vertex belongs to Eve if $x_i$ is existentially quantified, and to Adam if $x_i$ is universally quantified;
	\item for each variable $x_i$ with $i < n$, there are two edges from $x_i$ and $\overline{x_i}$ to the next choice vertex $v_{i+1}$;
	\item there is a sink $s$, and two edges from $x_n$ and $\overline{x_n}$ to $s$;
	\item for each clause $\set{\ell_{i,1},\ldots,\ell_{i,j_i}}$,
there is a reachability objective $F_i$ which contains the corresponding vertices;
	\item the generalized reachability objective is given by $\GenReach(F_1,\ldots,F_k)$.
\end{itemize}
The initial vertex is $v_1$.
There is a natural bijection between assignments of the variables and plays in this game;
and an assignment satisfies the formula $\phi$ if and only if 
the play satisfies the generalized reachability objective.
The evaluation order of the variables being the same in the formula and in the game,
we conclude that Eve has a winning strategy if and only if the formula is true.

\begin{example}
We consider the following quantified boolean formula
$$\forall x\ \exists y\ \forall z\ (x \vee \neg y) \wedge (\neg y \vee z)\ .$$
Figure~\ref{fig:example_red} shows the game built by the reduction.
The generalized reachability objective is
$\Reach(\set{x, \overline{y}}) \wedge \Reach(\set{\overline{y},z})$.
Thick vertices represent the first reachability objective
and dashed vertices the second one.

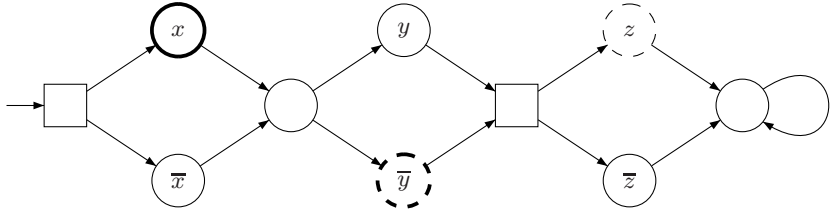
\begin{figure}
\label{fig:example_red}
\begin{center}
\begin{picture}(90,20)(5,0)
	\gasset{Nw=7,Nh=7}
 
	\rpnode[Nmarks=i,polyangle=45](1)(0,10)(4,4){}
	\node[linewidth=0.5](x)(15,20){$x$}
	\node(nx)(15,0){$\overline{x}$}

	\drawedge(1,x){}
	\drawedge(1,nx){}

	\node(2)(30,10){}
	\node(y)(45,20){$y$}
	\node[dash={1.5}0,linewidth=0.5](ny)(45,0){$\overline{y}$}

	\drawedge(x,2){}
	\drawedge(nx,2){}
	\drawedge(2,y){}
	\drawedge(2,ny){}

	\rpnode[polyangle=45](3)(60,10)(4,4){}
	\node[dash={1.5}0](z)(75,20){$z$}
	\node(nz)(75,0){$\overline{z}$}

	\drawedge(y,3){}
	\drawedge(ny,3){}
	\drawedge(3,z){}
	\drawedge(3,nz){}

	\node(e)(90,10){}
	\drawedge(z,e){}
	\drawedge(nz,e){}
	\drawloop[loopangle=0](e){}
\end{picture}
\end{center}
\caption{An example of the reduction from $\QBF$ to generalized reachability games.}
\end{figure}
\end{example}

\begin{theorem}[Complexity of generalized reachability games]
Solving generalized reachability games is $\PSPACE$-complete.
\end{theorem}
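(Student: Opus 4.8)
The statement packages a lower bound with an upper bound, and the lower bound is essentially already in hand.

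For $\PSPACE$-hardness, the reduction from $\QBF$ described just above does the job: it is clearly computable in logarithmic space (the game has linearly many vertices and one colour per clause), and we argued that Eve wins the resulting generalized reachability game from $v_1$ if and only if the quantified boolean formula is true. Since $\QBF$ is $\PSPACE$-complete, so is the winner problem.

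For membership in $\PSPACE$, the plan is to pass to the product of the arena with the natural ``progress'' memory. Let $\M = (2^{\{1,\dots,k\}},\ \emptyset,\ \up)$ where $\up$ records, edge by edge, which of the sets $F_1,\dots,F_k$ have already been visited; in the expanded arena $\Aa \times \M$ the generalized reachability objective becomes the plain reachability objective $\Reach(T)$ with $T = V \times \{\{1,\dots,k\}\}$, and by the key product property Eve wins $\G$ from $v_0$ iff she wins this reachability game from $(v_0,\emptyset)$. The crucial observation is that $\M$ is \emph{monotone}: along any play the memory component only grows, so every play visits at most $k+1$ distinct memory values, and for a fixed memory value $S$ the corresponding ``slice'' of $\Aa\times\M$ is just the sub-arena of $\Aa$ induced by the vertices whose colours are contained in $S$ — in particular it has at most $n$ vertices.

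From this layered structure I would extract a polynomial bound on the length of winning plays: if Eve wins, then inside each slice she only has to reach, by an attractor strategy within that $\leq n$-vertex sub-arena, a vertex from which one edge leaves the slice into a position that is still in her winning region; chaining these across the at most $k$ memory increases gives her a winning strategy all of whose plays reach $T$ within $N := (k+1)\cdot n$ steps. Since generalized reachability games are determined (recalled above), it follows that the alternating algorithm which unfolds the game tree down to depth $N$ — existential choices at $\VE$-vertices, universal choices at $\VA$-vertices, accepting as soon as all colours have been seen, rejecting at depth $N$ otherwise — is correct and runs in alternating polynomial time, so the problem lies in $\APTIME = \PSPACE$. (Equivalently, one can compute the winning regions of the slices by a recursion whose depth is bounded by the number of still-missing colours, each level solving one polynomial-size reachability game, which already yields a deterministic polynomial-space procedure.)

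The routine ingredients are the product construction and the unfolding argument. The real work, and the step I expect to be the main obstacle, is the layering claim together with the polynomial play-length bound: one must check that the winning region of $\Aa\times\M$ restricted to a slice $S$ is exactly the attractor, \emph{within that slice}, of the set of vertices having an edge to a winning position with strictly larger memory value, which calls for an induction on $k-|S|$ and some care in verifying that at a slice-internal $\VE$-vertex of the winning region which is not such an ``exit'' vertex Eve still has a winning successor staying in the slice.
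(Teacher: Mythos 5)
Your proposal is correct and follows essentially the same route as the paper: $\PSPACE$-hardness from the $\QBF$ reduction, and membership by bounding the length of winning plays polynomially (the paper gets $n\cdot k$ via the observation that each new color can be enforced within $n$ steps, which is the same fact your product/slice argument establishes in more detail) and then simulating the truncated game on an alternating polynomial-time machine, using $\APTIME = \PSPACE$.
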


\begin{proof}
The previous reduction implies the $\PSPACE$-hardness.

Let us first make a simple observation: if Eve has a winning strategy, then she has a winning strategy 
that visits each reachability set within $n \cdot k$ steps.
Indeed, if she can enforce to visit a subset of vertices, then she can enforce it within $n$ steps.

Relying on this remark, we can simulate the game for up to $n \cdot k$ steps using an alternating Turing machine:
whenever a vertex belongs to Eve, the corresponding state is disjunctive, and it is conjunctive if the vertex belongs to Adam. 
A path of length $n \cdot k$ is accepted if it is winning, 
\textit{i.e} if it contains one vertex from each reachability set $F_i$. 
This machine accepts if and only if Eve wins, and works in polynomial time.
Since $\APTIME = \PSPACE$, the result follows.\hfill\qed
\end{proof}

\subsection{Parameterized complexity}

Solving generalized reachability games with few colors is easy:

\begin{theorem}[Generalized reachability games with $k$ colors]
\label{thm_two}
Solving generalized reachability games is fixed-parameter tractable ($\FPT$)
with the number of colors as parameter.
\end{theorem}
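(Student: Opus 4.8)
The plan is to show that the winner can be decided in time $f(k) \cdot \mathrm{poly}(n,m)$ by reducing the problem to a reachability-type game on an expanded arena whose blow-up is controlled by a function of $k$ alone. The key observation is that for the generalized reachability objective, the only information about the history that matters is \emph{which subset of the $k$ colors has already been seen}. So I would take as memory structure $\M = (M, m_0, \up)$ with $M = 2^{\{1,\dots,k\}}$, $m_0 = \set{i \mid v_0 \in F_i}$, and $\up(S, (u,v)) = S \cup \set{i \mid v \in F_i}$. This memory structure has $2^k$ states, a function of $k$ only. By the \emph{key} property highlighted in the Definitions section, if a player has a memoryless winning strategy in $\Aa \times \M$ (for the appropriate objective), then that player has a winning strategy in $\Aa$ using $\M$ as memory. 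Conversely, since the memory $\M$ records exactly the relevant information, any winning strategy in $\Aa$ projects to a memoryless winning strategy in $\Aa \times \M$; so it suffices to solve the game on $\Aa \times \M$.

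Next I would recast the objective on the expanded arena as an ordinary reachability objective. In $\Aa \times \M$, the generalized reachability objective $\GenReach(F_1,\dots,F_k)$ becomes $\Reach(T)$ where $T = V \times \set{\set{1,\dots,k}}$ is the set of expanded vertices whose memory component is the full color set: a play reaches $T$ if and only if all $k$ colors have been seen. Now $\Aa \times \M$ is an arena with $|V| \cdot 2^k \le n \cdot 2^k$ vertices and at most $m \cdot 2^k$ edges, and solving a reachability game is done in linear time in the size of the arena by computing the attractor $\Attr(T)$, as recalled earlier in the excerpt. Hence the whole procedure runs in time $O((n+m) \cdot 2^k)$, which is $\FPT$ with parameter $k$ (indeed, even with the very mild dependence $2^k$ on the parameter). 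Eve wins from $v_0$ in $\Aa$ iff $(v_0, m_0) \in \Attr(T)$ in $\Aa \times \M$.

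The main thing to be careful about is the correctness of the reduction in \emph{both} directions, in particular that a winning strategy for Eve in $\Aa$ really does induce a memoryless winning strategy in $\Aa \times \M$ — this uses the fact that the memory content $S$ reached after a history $w$ is a deterministic function of $w$ (via $\up^*$), so two histories with the same current vertex and the same set of seen colors are genuinely interchangeable with respect to the remaining objective. The rest (linear-time attractor computation, size bound of the product) is routine. I would note in passing that this argument already yields the upper bound on Eve's memory: $\M$ restricted to reachable, non-full memory states gives at most $2^k - 1$ memory states, matching the lower bound announced in the introduction.
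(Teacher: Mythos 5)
Your proposal is correct and follows essentially the same route as the paper: the same memory structure $\M = (2^{\set{1,\ldots,k}}, m_0, \up)$, the same product reduction to an ordinary reachability game on $\Aa \times \M$ with target the vertices whose memory component is the full color set, and the same $O((n+m)\cdot 2^k)$ attractor computation. The additional care you take about both directions of the reduction is consistent with what the paper tacitly uses.
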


Roughly speaking, the only information needed during a play is 
the subset of reachability sets already visited.
We build a memory structure that keeps track of this information.
By constructing the product with this memory structure,
we turn a generalized reachability game into a (classical) reachability game.

\begin{proof}
We consider $\G = (G,\GenReach(F_1,\ldots,F_k))$ a generalized reachability game,
and $v_0$ a starting vertex.
The memory structure $\M$ is defined by $(2^{\set{1,\ldots,k}}, m_0, \up)$, 
where $m_0$ is $\set{i \mid v_0 \in F_i}$,
and $\up(S,(v,v')) = S \cup \set{i \mid v' \in F_i}$.
Let $F = \set{(\_,S) \mid S = \set{1,\ldots,k}}$:
a play for the generalized reachability game $\G$ from $v_0$ is winning 
if and only if 
it is winning for the reachability game $\G \times \M = (G \times \M,\Reach(F))$ 
from $(v_0,m_0)$.
Since deciding the winner in a reachability game can be done in linear time
using an attractor computation,
solving a generalized reachability game can be done in time $2^k \times O(n + m)$.\hfill\qed
\end{proof}

\subsection{Solving one-player restrictions}

\begin{theorem}[One-player restrictions]
Solving one-player generalized reachability games is $\NP$-complete.
Solving opponent-player generalized reachability games is polynomial.
\end{theorem}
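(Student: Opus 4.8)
The plan is to handle the two statements separately, reusing the $\QBF$ gadget from the previous subsection for the hardness part.

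\medskip\noindent\textbf{One-player games are in $\NP$.}
When $\VA$ is empty, a play is simply an infinite path chosen by Eve, so Eve wins from $v_0$ if and only if there is a finite walk from $v_0$ that visits every $F_i$. I would invoke the observation already made in the proof of the $\PSPACE$-completeness theorem: if such a walk exists, then one of length at most $n \cdot k$ exists (fix an ordering of the first visits to the $k$ colors and join consecutive ones by shortest paths, each of length less than $n$). A certificate is therefore a sequence of at most $n \cdot k$ vertices; checking that it forms a legal path from $v_0$ and meets all of $F_1, \ldots, F_k$ is polynomial. (The walk need not be completed into an infinite play explicitly, since the arena has no dead ends.) Hence the problem lies in $\NP$.

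\medskip\noindent\textbf{One-player games are $\NP$-hard.}
I would specialize the reduction from $\QBF$ of the previous subsection to a plain $\SAT$ instance, i.e.\ to a quantified formula $\exists x_1 \ \exists x_2 \ \cdots \ \exists x_n \ \phi$ with $\phi$ in conjunctive normal form. Since every variable is existentially quantified, every choice vertex $v_i$ belongs to Eve; the literal vertices $x_i, \overline{x_i}$ each have a single outgoing edge, so assigning them to Eve as well changes nothing. The resulting game is one-player, and by the correctness of the reduction Eve wins from $v_1$ if and only if $\phi$ is satisfiable. As $\SAT$ is $\NP$-complete, this yields $\NP$-hardness, and combined with the previous point, $\NP$-completeness.

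\medskip\noindent\textbf{Opponent-player games are polynomial.}
When $\VE$ is empty, Eve has no choices, so ``Eve wins $\Phi$ from $v_0$'' means exactly ``every play from $v_0$ belongs to $\Phi$''. For $\Phi = \GenReach(F_1, \ldots, F_k) = \bigcap_{i} \Reach(F_i)$ the universal quantification over plays commutes with the intersection: every play hits every $F_i$ iff, for each $i$, every play hits $F_i$. This gives
$$\WE(\Aa, \GenReach(F_1, \ldots, F_k)) \;=\; \bigcap_{i=1}^{k} \WE(\Aa, \Reach(F_i)) \;=\; \bigcap_{i=1}^{k} \Attr(F_i).$$
Each $\Attr(F_i)$ is computed in time $O(n+m)$ (here it degenerates into a backward graph search, since all vertices are Adam's), so the whole procedure runs in time $O\big(k \,(n+m)\big)$, which is polynomial.

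\medskip\noindent\textbf{Main obstacle.}
The only subtle point is the distributivity used in the opponent-player case: that Eve wins the generalized reachability game iff she wins each reachability game $\Reach(F_i)$ separately. This is \emph{false} for genuine two-player generalized reachability games — Eve may win each $\Reach(F_i)$ with mutually incompatible strategies — and the crux of the argument is recognizing that it is precisely the absence of choices for Eve that makes ``for all plays'' commute with ``for all colors''. The remaining ingredients (bounding certificate length for membership in $\NP$, and restricting the existing $\QBF$ gadget for hardness) are routine.
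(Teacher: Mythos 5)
Your proposal is correct and follows essentially the same route as the paper: $\NP$-membership by guessing a witnessing path of length at most $n \cdot k$, $\NP$-hardness by specializing the $\QBF$ reduction to purely existential formulas (i.e.\ $\SAT$), and for the opponent-player case the identification of Eve's winning set with $\bigcap_i \Attr(F_i)$ (the paper states the complementary set $V \setminus \bigcap_i \Attr(F_i)$ for Adam). Your explicit remark that the intersection decomposition relies on Eve having no choices, and would fail in the two-player setting, is exactly the point the paper leaves implicit.
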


\begin{proof}
We first deal with one-player generalized reachability games, where Eve controls all vertices.
In our previous reduction, consider the case where all variables in the original formula are quantified existentially. 
Then the problem corresponds to $\SAT$ (satisfiability of a boolean formula in conjunctive normal form), which is $\NP$-complete.
Resulting games are one-player games, \textit{i.e} all vertices belong to Eve,
hence solving one-player generalized reachability games is $\NP$-hard.

We describe a non-deterministic algorithm to solve these games in polynomial time. 
As noted before, if Eve wins, then she has a winning strategy that wins within $n \cdot k$ steps. 
The algorithm guesses a path of length $n \cdot k$ and checks whether it is winning.
It follows that solving one-player generalized reachability games is $\NP$-complete.

We now consider opponent-player generalized reachability games,
given by the objective $\GenReach(F_1,\ldots,F_k)$.
The winning set for Adam is $V \setminus \bigcap_i \Attr(F_i)$,
which can be computed in quadratic time.\hfill\qed
\end{proof}

\subsection{Memory requirements}

We first present upper bounds:

\begin{lemma}[Memory upper bounds]
\label{lem_mem_exp_upp}
For all generalized reachability games $\G = (G,\GenReach(F_1,\ldots,F_k))$,
\begin{itemize}
	\item if Eve wins, then she wins using a strategy with memory $2^k - 1$;
	\item if Adam wins, then he wins using a strategy with memory $\binomksurdeuxk$.
\end{itemize}
\end{lemma}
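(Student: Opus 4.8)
The plan is to use the standard product construction from Theorem~\ref{thm_two}, but with a more economical memory structure, and then to invoke the ``key property'' that a memoryless winning strategy in $\Aa \times \M$ lifts to an $\M$-memory strategy in $\Aa$. For Eve, the natural memory is the set of colors seen so far, i.e.\ $2^{\set{1,\dots,k}}$, which has $2^k$ states. To shave off one state I would observe that once all $k$ colors have been collected the objective is already satisfied, so the full set $\set{1,\dots,k}$ is an absorbing ``accepting'' memory state from which the next move is irrelevant; we can therefore delete it (redirecting any update into it to an arbitrary fixed successor, or simply noting the play is already won) and keep only the $2^k - 1$ proper subsets. In the resulting reachability game $\Aa \times \M'$ Eve wins from $(v_0,m_0)$ iff she wins $\G$ from $v_0$, and reachability games are memoryless-determined for Eve (via the attractor), so the lifted strategy uses exactly $2^k-1$ memory states.

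For Adam the bound $\binom{k}{\ksurdeux}$ is smaller than $2^k$, so the crude ``remember the exact subset of colors seen'' memory is wasteful, and the heart of the argument is identifying what Adam really needs to track. The key observation is that Adam's objective, $\Pi \setminus \GenReach(F_1,\dots,F_k)$, is a \emph{disjunction} of co-reachability (safety-type) objectives: to win, Adam must ensure that at least one color $i$ is \emph{never} visited, i.e.\ that the play stays forever in $V \setminus F_i$ for some $i$. So Adam's relevant state information is the set of colors \emph{still avoidable}, i.e.\ the set of $i$ such that $F_i$ has not yet been visited. Rather than carry the whole subset, I would use an \emph{antichain} representation: Adam only needs to remember a maximal set of still-avoidable colors consistent with his play, and by Sperner's theorem the largest antichain in $2^{\set{1,\dots,k}}$ has size $\binom{k}{\ksurdeux}$. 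Concretely, I would set up the memory structure whose states are the subsets $S$ of colors that form the ``downward-closed'' obligations, argue that Adam's winning strategy in $\Aa\times\M$ can be taken memoryless (reachability/safety games are memoryless for both players), and then bound the number of memory states actually reachable along plays consistent with a winning strategy by an antichain, hence by $\binom{k}{\ksurdeux}$.

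More carefully, the cleaner route for Adam is: since for each fixed $i$ the objective ``never visit $F_i$'' is a safety objective and hence positionally winning for Adam on its winning set $V\setminus\Attr(F_i)$, Adam's overall winning strategy is a ``combination'' of these $k$ positional sub-strategies. The memory only needs to record which safety objective Adam is currently committing to — but he may need to switch targets as the play progresses, and the subtlety is that he can only ever \emph{shrink} the set of colors he still hopes to avoid (once $F_i$ is hit, color $i$ is gone for good). I would formalise a memory structure tracking, essentially, the current ``plan'' as a subset, show updates are monotone decreasing, and then prune to the states actually used, which by monotonicity and a Sperner-type argument number at most $\binom{k}{\ksurdeux}$. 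The main obstacle I anticipate is exactly this last pruning step: showing that a carefully chosen winning strategy for Adam visits only an antichain's worth of distinct memory states — this requires designing the strategy so that it never ``revisits'' a plan strictly between two others, and arguing that the monotone chain of plans along any play has length $\le k$ while across all plays the set of plans forms an antichain. Everything else (the product construction, memoryless determinacy of the resulting reachability/safety games, and the lifting via the key property) is routine given the machinery already set up in the excerpt.
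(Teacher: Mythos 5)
Your Eve bound is correct and is essentially the paper's argument: take the product with the subset memory $\M = (2^{\set{1,\ldots,k}},m_0,\up)$, use memoryless determinacy of the resulting reachability game, lift via the key property, and discard the absorbing ``all colors seen'' state to get $2^k-1$.

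The Adam bound, however, has a genuine gap, and you point at it yourself: the pruning step. Your plan is to track, as a subset, the colors still avoidable (equivalently, the colors already seen) and then argue that a well-chosen winning strategy uses only an antichain's worth of distinct subsets, invoking Sperner. But the subsets such a strategy must distinguish need not form an antichain: along a single play the set of seen colors only grows, so it forms a \emph{chain}, and across different plays and different vertices the relevant ``plans'' can perfectly well be comparable --- one vertex $u$ may have maximal still-winning set $\set{1}$ while another vertex $w$ has $\set{1,2}$, and a correct strategy may need both as memory contents. So nothing forces the globally used subsets to be an antichain, and the claimed bound does not follow from what you set up. The paper closes this with two ingredients missing from your sketch. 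First, a monotonicity lemma in the product game: if Adam wins from $(v,S)$ and $S' \subseteq S$, then he wins from $(v,S')$ with the same strategy. Second --- the crucial trick --- the antichain bound is applied \emph{per vertex}: for each $v$, the maximal sets $S_1(v),\ldots,S_p(v)$ with $(v,S_i(v))$ winning for Adam are pairwise incomparable, hence $p \le \binomksurdeuxk$. The memory states are then abstract indices $i \in \set{1,\ldots,\binomksurdeuxk}$ whose meaning $S_i(v)$ is reinterpreted at the current vertex; the update picks a $j$ with $S_j(v') \supseteq S_i(v) \cup \set{l \mid v' \in F_l}$, the next-move function follows a transition of the product staying in Adam's winning region, and the invariant ``$S_i(v)$ contains the visited colors and $(v,S_i(v))$ is winning'' yields correctness. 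Without this vertex-dependent reuse of memory states, an antichain argument alone cannot bring the memory below $2^k$.
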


As in the proof for $\FPT$ membership, we make use of the memory structure 
$\M = (2^{\set{1,\ldots,k}}, m_0, \up)$, 
where $m_0$ is $\set{i \mid v_0 \in F_i}$,
and $\up(S,(v,v')) = S \cup \set{i \mid v' \in F_i}$.
Setting $F$ as $\set{(\_,S) \mid S = \set{1,\ldots,k}}$,
a play for the generalized reachability game $\G$ from $v_0$ is winning 
if and only if 
it is winning for the reachability game $\G \times \M = (G \times \M,\Reach(F))$ 
from $(v_0,m_0)$.

\begin{proof}
We consider $\G = (G,\GenReach(F_1,\ldots,F_k))$ a generalized reachability game,
and $v_0$ a starting vertex.
Since in the reachability game $\G \times \M$, each player has memoryless winning strategies, 
each player has in $\G$ a winning strategy using $\M$ as memory structure.

The memory set of $\M$ has size $2^k$.
In order to get the correct bounds for each player, we rely on two observations.
\begin{itemize}
	\item Eve does not need a specific memory state to remember that all colors
have been reached, as in this case, she has already won.
Thus, she can always win with $2^k-1$ memory states.
	\item If Adam wins in $\G \times \M$ from $(v,S)$ and $S' \subseteq S$,
then he wins from $(v,S')$ \textit{using the same strategy}.
Consider $v \in V$ a vertex in $\G$, 
and the set of subsets $S$ such that $(v,S)$ belongs to Adam's winning set.
Its maximal (with respect to inclusion) elements are incomparable, so there are at most $\binomksurdeuxk$, 
we denote them by $S_1(v),\ldots,S_p(v)$.
The idea is that from $v$, there are only $p$ different options Adam has to consider, namely 
$S_1(v),\ldots,S_p(v)$. 
Indeed, for any $S$ such that $(v,S)$ is winning for Adam,
there exists an $i$ such that $S \subseteq S_i(v)$, so Adam can forget $S$ and assume the current position is $(v,S_i(v))$.

We define a memory structure on the memory set $\set{1,\ldots,\binomksurdeuxk}$.
We aim at constructing a strategy that will ensure that after a finite play $\pi \cdot v$,
the memory state is an $i$ such that $S_i(v)$ 
contains the set of visited colors.
If the initial vertex is $v_0$, the initial memory state is an $i_0$ such that 
$S_{i_0}(v_0)$ contains $m_0$.
We define the update function: $\up(i,(v,v'))$ is a $j$ such that 
$S_j(v')$ contains $\up(S_i(v),v') = S_i(v) \cup \set{i \mid v' \in F_i}$.

Let us turn to the next-move function. 
Consider $(v,S)$ in Adam's winning set,
then there exists a transition to some $(v',S')$ also in Adam's winning set.
Applying this to $(v,S_i(v))$ such that $S \subseteq S_i(v)$, we get a vertex $v'$, and define $\next(v,i)$ to $v'$.
Playing this strategy, the above invariant is satisfied,
and thus ensures to stay forever in Adam's winning set, so it is winning.
The memory set contains $\binomksurdeuxk$ memory states.
\end{itemize}
\hfill\qed
\end{proof}

\begin{lemma}[Memory lower bounds for both players]
For all $k$, 
\begin{itemize}
	\item there exists $\G = (G,\GenReach(F_1,\ldots,F_k))$ a generalized reachability game, 
	where Eve needs $2^k - 1$ memory states to win;
	\item there exists $\G = (G,\GenReach(F_1,\ldots,F_k))$ a generalized reachability game, 
	where Adam needs $\binomksurdeuxk$ memory states to win.
\end{itemize}
\end{lemma}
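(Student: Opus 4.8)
The plan is to build, for each player, an explicit arena forcing the claimed amount of memory. The guiding intuition from Lemma~\ref{lem_mem_exp_upp} is that the only relevant information during a play is the set of colors seen so far, so a lower bound must be an arena where the winning player \emph{provably} has to distinguish many such sets: if a finite-memory strategy collapses two histories with different color-sets to the same memory state, the opponent can exploit the collapse.

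\medskip\noindent\textbf{Eve's lower bound ($2^k - 1$).}
I would take an arena that is essentially the ``product'' of $k$ independent two-choice gadgets controlled by Adam. Concretely: from a central vertex there is a small sub-arena where, for each color $i$, Adam gets to decide (via a square vertex) whether the play passes through a vertex carrying color $i$ or avoids it, and Eve must, \emph{after} all of Adam's choices have been revealed one by one, route through exactly the colors Adam skipped in order to complete the conjunction. The point is that Eve has to react: when Adam has fixed a nonempty set $S \subsetneq \{1,\ldots,k\}$ of colors already provided, Eve's continuation must cover precisely the complement, and these $2^k-1$ continuations are pairwise distinct (different target colors). Formally, for each nonempty $S$ I exhibit two Adam strategies that produce the same current vertex but with color-histories $S$ and $S'$ ($S \ne S'$, $S, S' \subsetneq \{1,\dots,k\}$), so that any winning Eve strategy must issue different next-moves; hence at least $2^k-1$ memory states (one per nonempty proper subset reachable as an ``outstanding obligation''; the full set needs no state since the game is already won, matching the upper bound's off-by-one). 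I use the key product property from the Definitions section in reverse: a strategy with $M$ memory states on $\Aa$ is a memoryless strategy on $\Aa \times \M$, and I count reachable winning configurations $(v,S)$ that Eve must separate.

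\medskip\noindent\textbf{Adam's lower bound ($\binom{k}{\lfloor k/2\rfloor}$).}
Here the roles flip and the combinatorial heart is the upper-bound analysis: Adam only needs to track a \emph{maximal} winning subset $S_i(v)$, and maximal incomparable subsets of $\{1,\dots,k\}$ number $\binom{k}{\lfloor k/2\rfloor}$ by Sperner's theorem, which is exactly where the binomial appears. So I design an arena in which Adam is the winner but, to remain winning, he must at a single control vertex $v$ be prepared to face any of the $\binom{k}{\lfloor k/2\rfloor}$ middle-layer subsets of colors as the set already seen, and his correct move depends on \emph{which} one it is. Roughly: Eve first gets to choose, via a sequence of her own vertices, which $\lfloor k/2\rfloor$-element set $S$ of colors to collect; then the play reaches $v \in \VA$; from $v$ Adam must ``block'' all of the remaining $\lceil k/2\rceil$ colors, and the gadget is arranged (with one outgoing branch of $v$ per color, each branch permanently precluding one specific still-missing color, using multiple copies of each color so no single branch kills two) so that the unique winning response differs for each $S$. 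Two histories reaching $v$ with distinct middle-layer sets $S \ne S'$ force distinct Adam next-moves, so a memoryless strategy on $\Aa \times \M$ needs $\binom{k}{\lfloor k/2\rfloor}$ states at $v$, hence Adam needs that much memory in $\Aa$.

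\medskip\noindent\textbf{Main obstacle.}
The hard part is the precise wiring of the Adam gadget so that (i) Adam genuinely wins from the start (for every Eve choice of the ``seen'' set $S$, Adam has a blocking continuation), (ii) the winning continuation is \emph{unique} per $S$ so that a strategy with fewer states provably fails, and (iii) the color multiplicities and branch structure don't accidentally let one Adam move handle two different $S$'s. The Sperner bound tells us $\binom{k}{\lfloor k/2\rfloor}$ is the right number, but realizing it as a \emph{forced} memory requirement — rather than merely a sufficient amount — is where the construction has to be checked carefully; Eve's side is easier because the ``outstanding obligation'' is a single reaction and the $2^k-1$ subsets are separated by nothing subtler than which colors still need visiting.
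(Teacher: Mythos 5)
Both halves of your plan contain the same kind of gap: you assert that different color-histories force different next-moves, but neither gadget, as described, contains a mechanism that makes this true. For Eve, your arena has Adam reveal a set $S$ of colors and then asks Eve to ``route through exactly the colors Adam skipped''; but nothing in the description prevents Eve from simply attempting to visit \emph{all} colors in the routing phase (or from using one continuation that covers the complements of two different sets $S\neq S'$ simultaneously), in which case the collapse of two histories to one memory state is harmless and no lower bound follows. Since over-covering is never penalized in a generalized reachability objective, a lower-bound arena must make ``finishing'' history-dependent. The paper's flower does exactly this: Adam repeatedly names a petal $i$, and Eve's only way to complete the objective is a stopping move $\overline{i}$ that grants \emph{all colors except the one Adam just named}, so the correctness of stopping depends on whether color $i$ was already collected. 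The lower bound is then a pigeonhole argument: with fewer than $2^k-1$ states some proper subset $X$ is not the ``stopping set'' $S_m$ of any memory state, and Adam wins adaptively by always naming a petal in the symmetric difference of $X$ and the current $S_m$. Your sketch has no analogue of this trade-off, and also never verifies that Eve actually wins the game, which is part of the statement.

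For Adam, your use of Sperner's theorem identifies why $\binomksurdeuxk$ is the right number, but the blocking gadget you describe does not force it. If each branch of $v$ merely precludes one still-missing color (or, more generally, precludes a set of colors larger than the set Eve has collected), then a single Adam move is winning against \emph{many} middle-layer sets $S$ at once --- e.g.\ any fixed family of about $\ksurdeux+1$ colors contains a non-element of every $\ksurdeux$-set --- so far fewer than $\binomksurdeuxk$ memory states suffice and the lower bound evaporates. You flag this uniqueness issue as the ``main obstacle'' but do not resolve it, and resolving it is precisely the content of the proof. The paper's construction achieves uniqueness by a different mechanism: with $k=2p+1$, Eve first visits $p$ colors, then \emph{Adam himself visits} $p$ colors, then Eve visits $p$ more. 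Because Adam's moves also grant colors, any deviation from exactly duplicating Eve's set leaves at most $p$ colors missing, which Eve then collects; hence Adam's unique winning response to Eve's set $S$ is $S$ itself, the response map is injective on the $\binom{2p+1}{p}=\binomksurdeuxk$ middle-layer sets, and the memory bound follows. Without this ``Adam's choices help Eve'' inversion (or some equivalent device), your blocking formulation cannot yield the binomial bound.
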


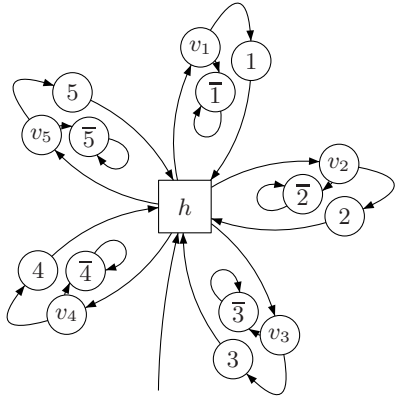
\begin{figure}
\begin{center}
\begin{picture}(10,45)(0,-23)
	\unitlength = 3.5mm
	\gasset{Nw=2,Nh=2,Nmr=2}

	\node[Nmr=0](coeur)(0,0){$h$}
	\node[Nframe=n](racine)(-1,-8){}
	\drawedge[curvedepth=.3](racine,coeur){}

	\gasset{Nw=1.5,Nh=1.5,Nmr=1.2}
	
	\node(a1)(0.59,6.05){$v_1$}
	\node(b1)(2.52,5.54){$1$}
	\node(c1)(1.16,4.35){$\overline{1}$}

	\drawedge[curvedepth=.7](coeur,a1){}
	\drawedge[curvedepth=2](a1,b1){}
	\drawedge[curvedepth=.7](b1,coeur){}
	\drawedge[curvedepth=.4](a1,c1){}
	\drawloop[loopangle=255,loopdiam=1](c1){}

	\node(a2)(5.86,1.62){$v_2$}
	\node(b2)(6.07,-0.37){$2$}
	\node(c2)(4.48,0.47){$\overline{2}$}

	\drawedge[curvedepth=.7](coeur,a2){}
	\drawedge[curvedepth=2](a2,b2){}
	\drawedge[curvedepth=.7](b2,coeur){}
	\drawedge[curvedepth=.4](a2,c2){}
	\drawloop[loopangle=186,loopdiam=1](c2){}

	\node(a3)(3.61,-4.89){$v_3$}
	\node(b3)(1.83,-5.80){$3$}
	\node(c3)(2.04,-4.01){$\overline{3}$}

	\drawedge[curvedepth=.7](coeur,a3){}
	\drawedge[curvedepth=2](a3,b3){}
	\drawedge[curvedepth=.7](b3,coeur){}
	\drawedge[curvedepth=.4](a3,c3){}
	\drawloop[loopangle=117,loopdiam=1](c3){}

	\node(ai)(-4.49,-4.11){$v_4$}
	\node(bi)(-5.58,-2.43){$4$}
	\node(ci)(-3.77,-2.45){$\overline{4}$}

	\drawedge[curvedepth=.7](coeur,ai){}
	\drawedge[curvedepth=2](ai,bi){}
	\drawedge[curvedepth=.7](bi,coeur){}
	\drawedge[curvedepth=.4](ai,ci){}
	\drawloop[loopangle=33,loopdiam=1](ci){}

	\node(ak)(-5.44,2.72){$v_5$}
	\node(bk)(-4.27,4.34){$5$}
	\node(ck)(-3.64,2.65){$\overline{5}$}

	\drawedge[curvedepth=.7](coeur,ak){}
	\drawedge[curvedepth=2](ak,bk){}
	\drawedge[curvedepth=.7](bk,coeur){}
	\drawedge[curvedepth=.4](ak,ck){}
	\drawloop[loopangle=324,loopdiam=1](ck){}
\end{picture}
\end{center}
\caption{A generalized reachability game where Eve needs $2^k - 1$ memory states to win}
\label{fig:flower}
\end{figure}

\begin{proof}
We first describe a generalized reachability game 
where Eve needs $2^k - 1$ memory states to win.
This example was proposed in~\cite{CHH} in a similar framework.
The arena is shown in Figure~\ref{fig:flower}, for $k = 5$.
A vertex labelled by $i$ belongs to $F_i$,
and a vertex labelled by $\overline{i}$ has all colors but $i$.
A play starts from the heart $h$;
first Adam chooses a petal $i$, 
then Eve chooses either to reach color $i$ before going back to the heart 
(the play goes on),
or to reach every colors but $i$ and to stop the play.
Eve wins with the following strategy: the first time Adam chooses the petal $i$, 
she goes back to the heart; the second time, she stops the play. 
This strategy uses $2^k$ memory states.
She can save one memory state by dropping the memory state corresponding to the case
where she saw each petal, as it is winning for her.
However, we show that there is no winning strategy for Eve 
with less than $2^k - 1$ memory states.
Let $\sigma$ a strategy using the memory structure $\M$
with less than $2^k - 1$ memory states, and $\next$ its next-move function.
For each memory state $m$, we consider
$S_m = \set{i \mid \next(v_i,\up(m,(h,v_i))) = \overline{i}}$,
the set of petals where Eve would stop the play if Adam chose them.
As there are less than $2^k - 1$ memory states, there is a strict subset $X$
of $\set{1,\ldots,k}$ which is not the stopping set of any memory state.
Adam can win against $\sigma$ by choosing, at each step, a petal in the symmetric
difference of $X$ and $S_m$, where $m$ is Eve's current memory under $\sigma$.
(Indeed, if Adam plays forever in $X$, then Eve will never stop the play and only colors from $X$ will be reached,
otherwise, whenever Eve stops the play, the last memory state from the heart was an $m$ such that $X \subset S_m$,
and the petal chosen is an $i$ that belongs to $S_m \setminus X$, hence that has never been reached.)

We now describe a generalized reachability game won by Adam, 
where he needs $\binomksurdeuxk$ memory states to win. 
Let $k = 2p + 1$.
A play consists in three steps:
first Eve chooses $p$ colors, 
then Adam chooses $p$ colors,
and third Eve chooses $p$ colors.
In order to win, Adam must visit exactly the same colors Eve visited (which requires $\binomksurdeuxk$ memory states),
otherwise at least $p+1$ colors have been visited when Eve plays for the second time,
and she can choose and visit the remaining colors that have not yet been visited.
\hfill\qed
\end{proof}

\section{Restrictions on the size of reachability sets}

The above section shows two different directions which make 
generalized reachability games hard: 
the first is the complexity of solving generalized reachability games ($\PSPACE$-complete),
and the second is the memory required to construct winning strategies for both players
(exponential in the number of colors).

In this section, we restrict the size of the reachability sets in order to find 
tractable subclasses of generalized reachability games.

Notice that our reduction from $\QBF$ only implies $\PSPACE$-hardness
when reachability sets have size at least three.
Indeed, note that in the reduction,
the size of a reachability set in the generalized reachability game
corresponds to the size of the corresponding clause of the formula.
Since the problem of evaluating a quantified boolean formula 
is polynomial if the formula has two variables per clause,
our reduction does not imply the $\PSPACE$-hardness of
solving generalized reachability games with reachability sets of size one or two.
This remark motivates our study of the subclasses of generalized reachability games where each color appears once,
and then where each color appears twice.

\subsection{Reachability sets of size one}

The case where reachability sets are singletons is polynomial:

\begin{theorem}[Generalized reachability games where reachability sets have size $1$]
\label{thm_singleton}
Solving generalized reachability games where reachability sets are singletons is in $\PTIME$.
\end{theorem}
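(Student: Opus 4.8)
The plan is to reduce the problem to a bounded number of attractor computations by characterizing Eve's winning positions exactly. Write $F_i = \set{t_i}$ for the singleton reachability set of color $i$, and say that a vertex $u$ \emph{leads to} a vertex $v$ when $u \in \Attr(\set{v})$. I claim that Eve wins from $v_0$ if and only if (i) $v_0$ leads to $t_i$ for every $i \in \set{1,\dots,k}$, and (ii) for every $i,j$, either $t_i$ leads to $t_j$ or $t_j$ leads to $t_i$. Since each set $\Attr(\set{t_i})$ is computed in linear time by an attractor fixpoint, both conditions are then checked in time $O(k\,(n+m) + k^2)$, which gives the theorem once the characterization is established.

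For the "if" direction, first observe that \emph{leads to} is reflexive and transitive: from $u$ Eve forces the play to $v$, then switches to a strategy forcing it to $w$, so $u$ leads to $w$. Hence condition (ii) says that \emph{leads to} is a total preorder on $\set{t_1,\dots,t_k}$, so the targets can be listed as $t_{i_1},\dots,t_{i_k}$ with $t_{i_j}$ leading to $t_{i_{j+1}}$ for every $j$; and by (i), $v_0$ leads to $t_{i_1}$. Eve then plays in $k$ successive phases, using in phase $j$ a positional attractor strategy aimed at $t_{i_j}$ and advancing to phase $j+1$ as soon as $t_{i_j}$ is reached. Phase $1$ starts at $v_0 \in \Attr(\set{t_{i_1}})$ and phase $j+1$ starts at $t_{i_j} \in \Attr(\set{t_{i_{j+1}}})$, so each phase terminates; after phase $k$ all targets have been visited, so the strategy is winning (incidentally with only $k$ memory states).

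For the "only if" direction, fix a winning strategy $\sigma$ for Eve. If $v_0 \notin \Attr(\set{t_i})$ for some $i$, then by determinacy of reachability games Adam has a strategy from $v_0$ that avoids $t_i$ forever; played against $\sigma$ it produces a play never visiting $t_i$, a contradiction, so (i) holds. For (ii), suppose towards a contradiction that $t_i \notin \Attr(\set{t_j})$ and $t_j \notin \Attr(\set{t_i})$ for some $i,j$; then Adam has a strategy $\tau_i$ that avoids $t_j$ when started from $t_i$, and a strategy $\tau_j$ that avoids $t_i$ when started from $t_j$. Against $\sigma$, Adam plays an arbitrary strategy until the play first reaches a vertex of $\set{t_i,t_j}$: if that vertex is $t_i$ he thereafter follows $\tau_i$ (so $t_j$ is never visited), if it is $t_j$ he follows $\tau_j$ (so $t_i$ is never visited), and if no vertex of $\set{t_i,t_j}$ is ever reached then $t_i$ is never visited. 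In every case the resulting play is losing for Eve, contradicting the choice of $\sigma$; hence (ii) holds, completing the characterization.

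The step I expect to be the crux is the "only if" part of (ii): one has to notice that a pair of targets neither of which leads to the other lets Adam combine his two avoidance strategies simply by waiting until Eve commits to one of them. The transitivity of the attractor relation, the phase strategy for the "if" direction, and the polynomial complexity count are all routine.
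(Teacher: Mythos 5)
Your proposal is correct and follows essentially the same route as the paper: the same attractor preorder on the colored vertices, the same characterization (Eve wins from $v_0$ iff the preorder is total and $v_0$ lies in every attractor), the same phase strategy for Eve, and the same ``wait until one of the two incomparable targets is reached, then avoid the other'' strategy for Adam. Your write-up just spells out the details (transitivity, the switching argument, the complexity count) more explicitly than the paper does.
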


\begin{proof}
We denote by $v_i$ the only vertex in $F_i$, for all $i$.
In this case, the generalized reachability objective can be expressed by $\bigwedge_{i \leq k} \Reach(v_i)$.
We will see that Eve wins if and only if 
the preorder defined by $v \preceq v'$ if $v \in \Attr(v')$ is total.
Intuitively, it means that a winning strategy prescribes: ``reach $v_{f(1)}$, then $v_{f(2)}$, and so on'', 
where $f$ is a permutation over $\set{1,\ldots,k}$.

We consider two cases:
\begin{itemize}
	\item If the preorder $\preceq$ is total over $\set{v_i \mid 1 \leq i \leq k}$, 
	then we show that $\WE$, set of winning positions for Eve, is $\cap_i \Attr(v_i)$. 
Let $v \in \cap_i \Attr(v_i)$ and $f$ a permutation over $\set{1,\ldots,k}$ 
such that for all $1 \leq i \leq k-1$, we have $v_{f(i)} \in \Attr(v_{f(i+1)})$, 
we construct a winning strategy from $v$ that reaches $v_{f(1)}$, then $v_{f(2)}$, and so on. 
Note that this strategy only needs $k$ memory states.
Conversely, if $v \notin \cap_i \Attr(v_i)$, then Eve cannot win, as Adam can prevent her from reaching some reachability set.
	\item If the preorder $\preceq$ is not total, then there exist $v_i$ and $v_j$ 
such that $v_i \notin \Attr(v_j)$ and $v_j \notin \Attr(v_i)$.
In this case Adam wins from everywhere, following the strategy ``if $v_i$ or $v_j$ has been reached, then avoid the other''. 
Note that this strategy only needs $2$ memory states.
\end{itemize}
Checking that the preorder $\preceq$ is total can be done in polynomial time.
\hfill\qed
\end{proof}

Note that as a corollary, we get memory upper bounds in this case: Eve needs at most $k$ memory states and Adam at most $2$.
It is not difficult to see that these bounds are tight.

\subsection{Reachability sets of size two}

Let us now turn to the case where reachability sets have size two.
We first extend the technique used for the previous case: it was stated that
``Eve wins if and only if there is a total order on colored vertices''.
A similar approach works for one-player arenas, through a reduction to the satisfiability problem
of boolean formulas where clauses have size two.
(This latter problem is known to be decidable in polynomial time.)

\begin{theorem}[Generalized reachability one-player games where color appears twice]
Solving generalized reachability one-player games where reachability sets have size two is in $\PTIME$.
\end{theorem}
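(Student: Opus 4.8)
The plan is to mimic the structure of the size-one case, but now the combinatorial object that must "work" is no longer a total preorder on colored vertices: it is a 2-SAT instance whose variables encode a relative ordering between the two vertices of each color. Since Eve controls every vertex, a strategy is just a (finite) path from $v_0$, and by the observation already used in Theorem~\ref{thm_singleton} and in the $\NP$-completeness proof, if Eve wins she wins with a path that visits all $F_i$ within $n\cdot k$ steps. The key idea is that such a winning path induces, for each color $i$, a choice of \emph{which} of the two vertices of $F_i$ is visited first (or that only one of them is ever visited), and these choices are constrained: if Eve commits to reaching the vertex $u$ of color $i$ before fulfilling color $j$, she must still be able to continue from $u$ and reach a vertex of color $j$, i.e. $u$ must lie in $\Attr$ of one of the two $F_j$-vertices she has not ruled out. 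I would introduce a boolean variable $y_i$ for each color meaning ``the first $F_i$-vertex reached is the first of the two listed'', and translate the reachability-between-colored-vertices constraints into 2-clauses over the $y_i$'s.

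First I would set up notation: write $F_i=\{a_i,b_i\}$, and for colored vertices $u,w$ recall $u\preceq w$ abbreviates $u\in\Attr(w)$, computable in polynomial time for all the relevant pairs. Then I would argue the following characterization: Eve wins from $v_0$ iff there is a permutation $f$ of $\{1,\dots,k\}$ and a choice of one vertex $w_{f(\ell)}\in F_{f(\ell)}$ for each $\ell$, together with $w_0\in\{v_0\}\cup(\text{a colored vertex of some }F_i)$ reachable, such that $v_0\preceq w_{f(1)}$ and $w_{f(\ell)}\preceq w_{f(\ell+1)}$ for all $\ell<k$ --- exactly as in the singleton case, except that instead of a fixed colored vertex per color we get to pick one of two. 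The second step is to observe that the existence of such an ordering-plus-selection is equivalent to the satisfiability of an explicit 2-CNF formula: for each ordered pair of colors $(i,j)$ and each choice of representatives, non-reachability $w_i\not\preceq w_j$ forbids ``$i$ immediately-or-eventually before $j$ with these representatives''. The care needed here is that a naive encoding gives ternary constraints (order bit of $i$, order bit of $j$, and a bit for ``$i$ before $j$''); the trick, as in classical reductions of ordering problems to 2-SAT, is that reachability $\preceq$ is transitive, so ``some colored vertex $\preceq$ some other'' collapses the ordering question, and one only needs, for each pair $\{i,j\}$, to record which \emph{combinations} of representative-choices are jointly schedulable --- a constraint over just $y_i,y_j$, hence a conjunction of 2-clauses.

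Concretely I expect the clauses to take the form: if, whichever representative of color $j$ is chosen, the chosen representative $w_i$ of color $i$ cannot reach it, then $i$ must come after $j$, and symmetrically; cross-multiplying the two representative choices for $i$ and the two for $j$ against the precomputed $\preceq$ relation yields, for the pair $\{i,j\}$, a small boolean relation on $(y_i,y_j)$ that is either all-true, a single 2-clause, a pair of them, or unsatisfiable (in which case Eve loses outright). One must also handle the ``only one vertex of $F_i$ is ever visited'' case --- but that is subsumed: picking $w_i$ to be whichever of $a_i,b_i$ is more convenient is always allowed, so the formulation above over the choice of a single representative per color already captures it. Collect all these 2-clauses over $y_1,\dots,y_k$; the formula is satisfiable iff a consistent global ordering of representatives exists iff Eve wins. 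Solving 2-SAT is in $\PTIME$, the $\preceq$-table is $O(k^2)$ attractor computations, and the whole procedure runs in polynomial time.

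I anticipate the main obstacle to be rigorously proving that the pairwise constraints are \emph{sufficient}, i.e. that local schedulability of every pair of colors, as certified by a satisfying assignment to the 2-CNF, can be stitched into one global path visiting all $k$ colors. This is a topological-sort-style argument: from a satisfying assignment fix the representatives $w_i$; show the directed graph on $\{w_1,\dots,w_k\}$ with an edge $w_i\to w_j$ whenever $w_i\preceq w_j$ is such that the relation ``not ($w_i\preceq w_j$) and not ($w_j\preceq w_i$)'' never occurs (that is exactly what the clauses rule out), so $\preceq$ restricted to the chosen representatives is total; then transitivity of $\Attr$ gives a linear order $w_{f(1)}\preceq\dots\preceq w_{f(k)}$, along which Eve concatenates attractor strategies, reaching each color in turn and hence winning. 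Conversely a winning path obviously yields representatives and an order, hence a satisfying assignment. I would allocate most of the write-up to making this equivalence airtight, and keep the 2-SAT encoding and complexity accounting brief.
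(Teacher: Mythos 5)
Your proposal follows essentially the same route as the paper's own proof: the same preorder $v \preceq v'$ iff $v \in \Attr(v')$, the same characterization of Eve winning by a $\preceq$-chain containing one vertex of each color (ordered via transitivity and realized by concatenating attractor/path strategies), and the same reduction to 2-SAT whose clauses forbid selecting two $\preceq$-incomparable colored vertices; the only difference is cosmetic, since the paper uses one variable per colored vertex together with clauses $X_i \vee Y_i$ where you use a single representative-choice variable per color. So the approach matches the paper and is correct to the same extent.
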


\begin{proof}
As in the previous subsection, we consider the preorder defined by $v \preceq v'$ if $v \in \Attr(v')$.
Note that in the case of one-player arenas, $v \in \Attr(v')$ reduces to ``there is a path from $v$ to $v'$''.

Let $F_i = \set{x_i,y_i}$ be the reachability sets, and $v_0$ be a starting vertex. 
We assume without loss of generality that there is a path from $v_0$ to every $F_i$ (that is, either to $x_i$ or $y_i$),
otherwise Eve cannot win. 
(This property is easily checked in deterministic polynomial time.)
A first statement is as follows: Eve wins from $v_0$ if and only if 
there exist $v_1,\ldots,v_k$ colored vertices such that
\begin{enumerate}
	\item for all $0 \leq i \leq k-1$, $v_i \preceq v_{i+1}$ and
	\item each color appears in $\set{v_1,\ldots,v_k}$.
\end{enumerate}
We turn this condition into a boolean formula where clauses have size $2$.
We consider the $2\cdot k$ variables $X_i$ and $Y_i$,
that correspond to vertices $x_i$ and $y_i$.
We define the formula $\phi$:
$$\underbrace{\bigwedge_{\ } \set{(\neg X \vee \neg Y) \mid 
\textrm{ if } x \not\preceq y \textrm{ and } y \not\preceq x}}_{(a)}
\wedge \underbrace{\bigwedge_{i} (X_i \vee Y_i)}_{(b)},$$
where $x,y$ ranges over colored vertices (that is, vertices from $F_i$ for some $i$).

We argue that Eve wins from $v_0$ if and only if $\phi$ is satisfiable.
Assume Eve wins from $v_0$: let $v_1,\ldots,v_k$ as in the previous statement,
and set the corresponding variables to true and the others to false, we claim that the formula $\phi$ is satisfied.
Indeed, condition 2. ensures that the clauses under-braced $(b)$ are satisfied,
and for the clauses under-braced $(a)$, let $x,y$ such that $x \not\preceq y$ and $y \not\preceq x$,
if $x$ is one of the $v_i$'s, then $y$ cannot be, so $\neg X \vee \neg Y$ holds.
Conversely, assume that $\phi$ is satisfiable.
The clauses under-braced $(a)$ ensures that the order $\preceq$ is total over vertices set to true.
The clauses under-braced $(b)$ ensures that at least one vertex from each reachability set is set to true.
Combining those two statements, we reach the condition stated above.

The latter allows to decide in polynomial time whether Eve wins from $v_0$ 
by checking the formula $\phi$ for satisfiability.\hfill\qed
\end{proof}

We do not know the exact complexity of generalized reachability games where reachability sets have size $2$.
In the remaining of this subsection, we discuss this question, focusing on memory requirements for both players.

The memory required for Eve is still exponential, as shown in Figure~\ref{fig:counter_ex_memory} for $k = 4$.
Specifically, it shows a generalized reachability game where reachability sets have size $2$ won by Eve, 
where she needs $2^{\ksurdeux + 1} - 1$ bits of memory to win.
The arena is divided into two parts: the left hand side is a flower with $\ksurdeux$ petals,
and the right hand side a one-player arena.
The game starts at the heart of the flower.
First, Eve asks for each petal a color.
Once this task is completed, she can move to the right hand side to reach the remaining colors.
Eve needs to remember the $\ksurdeux$ choices made by Adam (one for each petal), in order to reverse them:
if Adam chose the color $1$, then the color $2$ has not been reached,
so Eve has to choose color $2$.
Remembering those choices and asking for each petal requires $2^{\ksurdeux + 1} - 1$ memory states.
(This is the size of the complete binary tree of depth $\ksurdeux$.)

\begin{figure}
\begin{center}
\unitlength = 0.9mm
\begin{picture}(60,40)(0,0)
 	\gasset{Nw=6,Nh=6}

 	\node[Nmarks=i](0)(10,20){}
  
 	\rpnode[polyangle=45](p1)(10,40)(4,4){}
 	\node(1)(0,30){$1$}
 	\node(2)(20,30){$2$}

 	\drawedge(0,p1){}
 	\drawedge[curvedepth=3](p1,2){}
 	\drawedge[curvedepth=-3](p1,1){}
 	\drawedge[curvedepth=-3](1,0){}
 	\drawedge[curvedepth=3](2,0){}

 	\rpnode[polyangle=45](p2)(10,0)(4,4){}
 	\node(3)(0,10){$3$}
 	\node(4)(20,10){$4$}

 	\drawedge(0,p2){}
 	\drawedge[curvedepth=3](p2,3){}
 	\drawedge[curvedepth=-3](p2,4){}
 	\drawedge[curvedepth=3](3,0){}
 	\drawedge[curvedepth=-3](4,0){}

 	\node(c1)(30,20){}
 	\node(d1)(40,10){$1$}
 	\node(d2)(40,30){$2$}
 	\node(c2)(50,20){}
 	\node(d3)(60,10){$3$}
 	\node(d4)(60,30){$4$}

 	\drawedge(0,c1){}
 	\drawedge(c1,d1){}
 	\drawedge(c1,d2){}
 	\drawedge(d1,c2){}
 	\drawedge(d2,c2){}
 	\drawedge(c2,d3){}
 	\drawedge(c2,d4){}
	\drawloop[loopangle=0](d3){}
	\drawloop[loopangle=0](d4){}	
\end{picture}
\end{center}
\caption{A generalized reachability game where Eve needs $2^{\ksurdeux + 1} - 1$ memory states to win}
\label{fig:counter_ex_memory}
\end{figure}

On the other hand, the exact memory required for Adam remains open.
The figure~\ref{fig:adamsize2}, following an idea of Christof Loeding, 
shows a generalized reachability game where reachability sets have size $2$ won by Adam, 
where he needs $4$ memory states to win.
The game starts from the left hand side vertex. First Eve chooses and visits three of the four colors 
(two colors in the first column, $1$ and $2$ or $3$ and $4$, and then one in the second column),
and sends the pebble to the right hand side vertex, controlled by Adam.
There, he has four options, each allowing all colors but one.
Remembering the four possibilities requires four memory states, and leads to a win.
However, with less memory states, one of the four option will never be played, and Eve wins.

Quite surprisingly, we could not generalize this example to obtain a better lower bound than $4$.
We do not know whether this bound is tight (in any arena, if Adam wins, then he has a winning strategy with $4$ memory states),
which is plausible. 
Note that this would imply a $\cocompl\NP^\NP$ algorithm: guess a winning strategy for Adam with $4$ memory states, 
and compose this strategy with the game, then solve the resulting one-player game.

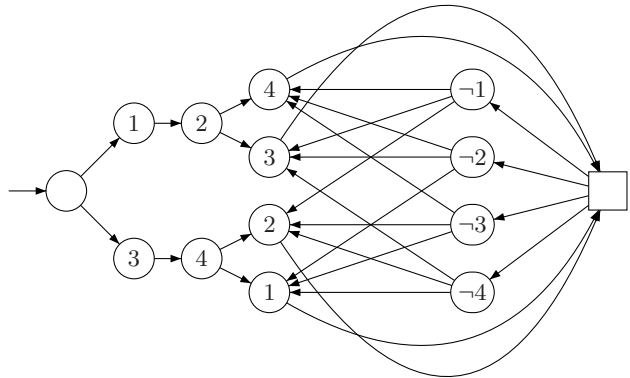
\begin{figure}
\begin{center}
\unitlength = 0.9mm
\begin{picture}(60,50)(0,-5)
 	\gasset{Nw=6,Nh=6}

 	\node[Nmarks=i](0)(0,20){}

 	\node(1)(10,30){$1$}
 	\node(2)(20,30){$2$}
 	\node(3)(10,10){$3$}
 	\node(4)(20,10){$4$}

 	\node(4b)(30,35){$4$}
 	\node(3b)(30,25){$3$}
 	\node(2b)(30,15){$2$}
 	\node(1b)(30,5){$1$}
  
	\gasset{Nadjust=w}
 	\node(234)(60,35){$\neg 1$}
 	\node(134)(60,25){$\neg 2$}
 	\node(124)(60,15){$\neg 3$}
 	\node(123)(60,5){$\neg 4$}

 	\rpnode[polyangle=45](c)(80,20)(4,4){}

 	\drawedge(0,1){}
 	\drawedge(0,3){}

 	\drawedge(1,2){}
 	\drawedge(3,4){}

 	\drawedge(2,4b){}
 	\drawedge(2,3b){}
 	\drawedge(4,2b){}
 	\drawedge(4,1b){}

 	\drawedge[curvedepth=15](4b,c){}
 	\drawedge[curvedepth=25](3b,c){}
 	\drawedge[curvedepth=-25](2b,c){}
 	\drawedge[curvedepth=-15](1b,c){}

 	\drawedge(c,123){}
 	\drawedge(c,124){}
 	\drawedge(c,134){}
 	\drawedge(c,234){}

 	\drawedge(123,1b){}
 	\drawedge(123,2b){}
 	\drawedge(123,3b){}

 	\drawedge(124,1b){}
 	\drawedge(124,2b){}
 	\drawedge(124,4b){}

 	\drawedge(134,1b){}
 	\drawedge(134,3b){}
 	\drawedge(134,4b){}

 	\drawedge(234,2b){}
 	\drawedge(234,3b){}
 	\drawedge(234,4b){}
\end{picture}
\end{center}
\caption{A generalized reachability game where Adam needs $4$ memory states to win}
\label{fig:adamsize2}
\end{figure}

\medskip\noindent{\bf Open problems.}
We were not able to give the exact complexity of generalized reachability games where reachability sets have size $2$.
The memory approach, showing that Adam has winning strategy of constant size, seems promising towards this question.

\bibliography{people,short,papers}
\bibliographystyle{alpha}

\end{document}